\documentclass[11pt]{article}

\usepackage[letterpaper,margin=1in]{geometry}
\usepackage{amsmath,amssymb,amsthm,amsfonts,bm}

\newtheorem{theorem}{Theorem}
\newtheorem{lemma}{Lemma}

\newtheorem{remark}{Remark}

\usepackage{graphicx}
\usepackage{booktabs}
\usepackage{natbib}
\usepackage[colorlinks=true,allcolors=blue]{hyperref}

\title{AIM: Amortized Inference for Multistate Transition Models}
\author{
Yuxi Zhu\textsuperscript{1,2,*},
Rui Zhang\textsuperscript{3}
\\[1em]
\small
\textsuperscript{1}Department of Pediatrics, Rainbow Babies \& Children's Hospital, Cleveland, OH, USA
\\
\small
\textsuperscript{2}School of Medicine, Case Western Reserve University, Cleveland, OH, USA
\\
\small
\textsuperscript{3}Department of Statistics, The Ohio State University, Columbus, OH, USA
\\[0.5em]
\small
\textsuperscript{*}Corresponding author: yuxi.zhu@uhhospitals.org
}
\date{}

\begin{document}
\maketitle

\begin{abstract}

Interval censored continuous time multistate transition models (MSTMs) are widely used to characterize disease progression and other dynamic processes. Existing inference procedures are predominantly likelihood-based and require a separate model fit for each dataset, involving repeated evaluation of transition probability matrices. This repeated computation can become burdensome when the same model is applied across centers, time periods, or newly collected cohorts. We propose AIM, an amortized Bayesian inference framework for interval censored continuous time MSTMs. AIM learns the relationship between informative summary statistics and model parameters from simulated datasets generated during an offline training stage. Once trained for a prespecified model class, AIM provides posterior inference for new datasets without repeated likelihood optimization.
Under fixed observation schedules and discretized covariate strata, we show that interval-specific transition counts are sufficient for the observed panel likelihood and establish identifiability of the corresponding population summaries. We further prove consistency of the induced summary posterior and its neural approximation, yielding a consistent posterior mean estimator.
Simulation studies under progressive and competing-pathway MSTMs demonstrate accurate point estimation and reliable uncertainty quantification. AIM completed online inference in milliseconds, with median speedups ranging from approximately 154-fold to 2308-fold relative to repeated likelihood-based estimation. These results establish AIM as a scalable and theoretically grounded framework for reusable Bayesian inference in interval censored MSTMs.
\end{abstract}

\textbf{Keywords:}
Multistate transition model; Amortized inference; Continuous time Markov chain; Interval censoring.

\section{Introduction}

MSTMs are widely used in biomedical and public health research to characterize disease progression, treatment pathways, and patient outcomes \citep{zhu2021multistate,zhu2025joint}. By modeling transition intensities between a finite set of states, they provide estimates of transition risks, state occupancy probabilities, and transition-specific covariate effects.

Inference for continuous time MSTMs is predominantly likelihood-based, using maximum likelihood estimation \citep{kalbfleisch1985analysis,jackson2011multi,gu2024maximum,zhu2024uniformization} or Bayesian posterior sampling \citep{williams2020bayesian}. For panel-observed data, these methods repeatedly evaluate transition probability matrices within numerical optimization or Markov chain Monte Carlo algorithms. More importantly, each new dataset requires a separate model fit, even when the transition structure, covariates, observation scheme, and parameter space remain unchanged. This limitation is particularly relevant in multicenter studies, disease registries, and continually updated electronic health record systems, where the same model may be repeatedly applied across centers, time periods, or newly collected cohorts. These settings motivate an approach that transfers computation to a reusable offline training stage.

Simulation-based inference (SBI) provides a framework for learning inferential procedures from simulations rather than repeated pointwise likelihood evaluation \citep{cranmer2020frontier}. Conventional approaches, including approximate Bayesian computation and synthetic likelihood, generally require a new inference procedure for each observed dataset \citep{Tavare1997,Pritchard1999,Beaumont2002,Wood2010}. Neural SBI instead learns posterior approximations from simulated parameter--dataset pairs and can therefore support amortized inference for future datasets \citep{papamakarios2016fast,greenberg2019automatic,radev2023bayesflow}. For MSTMs, \citet{tancredi2019approximate} developed an approximate Bayesian computation procedure for panel-observed continuous-time processes, but the inference remains dataset specific. To our knowledge, no existing method provides amortized neural posterior inference tailored to MSTMs. A central challenge is to construct fixed-dimensional representations that accommodate interval censoring, transition constraints, and transition-specific covariate effects while retaining the information required for posterior inference.

We propose AIM, an amortized Bayesian inference framework for MSTMs. AIM combines model-based simulation, multistate specific summary statistics, and neural posterior estimation. Datasets are simulated from a prespecified model class during an offline training stage, after which posterior inference can be performed for compatible new datasets without dataset-specific likelihood optimization or posterior sampling. Under fixed observation schedules and finitely many covariate strata, we establish sufficiency of interval-specific transition counts for the observed panel likelihood. Under appropriate reachability and design conditions, we further establish identifiability of the population summaries and consistency of the induced summary posterior, its neural approximation, and the resulting posterior mean estimator. Simulation studies under progressive and competing-pathway models evaluate estimation accuracy, uncertainty quantification, and computational efficiency relative to repeated likelihood-based analyses. A real data application illustrates the proposed framework in practice.

The remainder of the paper is organized as follows. Section~2 introduces AIM and the proposed summary representation. Section~3 presents the theoretical results. Section~4 reports the simulation studies, Section~5 presents the real-data application, and Section~6 concludes with a discussion.

\section{Method}
\label{sec:method}

\subsection{Continuous time multistate model}
\label{sec:model}

Let
\[
X_i(t)\in\{1,\ldots,S\},
\]
denote the state occupied at time \(t\) by subject \(i=1,\ldots,N\). The set of
scientifically admissible transitions is
\[
\mathcal E
=
\{(r,s):r\ne s,\ q_{rs}>0\}.
\]
Conditional on a vector of time-invariant covariates \(z_i\in\mathbb R^p\), the
process is assumed to be a time-homogeneous continuous time Markov chain with
generator
\[
Q(z_i;\theta)=\{q_{rs}(z_i;\theta)\},
\]
whose off-diagonal entries satisfy, for \((r,s)\in\mathcal E\),
\[
q_{rs}(z_i;\theta)
=
\exp\{\beta_{rs,0}+z_i^\top\beta_{rs}\},
\]
and \(q_{rs}(z_i;\theta)=0\) for transitions not in \(\mathcal E\). The diagonal
entries are \(q_{rr}(z_i;\theta)=-\sum_{s\ne r}q_{rs}(z_i;\theta)\). The unknown
parameter is
\[
\theta
=
\{(\beta_{rs,0},\beta_{rs}):(r,s)\in\mathcal E\}\in\Theta .
\]

Subjects are observed at a fixed panel of visit times
\[
0=t_0<t_1<\cdots<t_T,
\]
so that the observed data for subject \(i\) are
\[
Y_i
=
\{X_i(t_0),\ldots,X_i(t_T),z_i\}.
\]
The exact transition times and any intermediate states between two consecutive
visits are unobserved. For a subject with covariate value \(z\), the interval
transition matrix over \((t_{k-1},t_k]\) is
\[
P_k(z;\theta)=\exp\{Q(z;\theta)\Delta_k\},\qquad
\Delta_k=t_k-t_{k-1}.
\]
Thus, apart from an initial-state distribution that is treated as nuisance and
free of \(\theta\), the observed panel likelihood has the familiar product form
\[
L_N(\theta;Y)
\propto
\prod_{i=1}^N\prod_{k=1}^T
\{P_k(z_i;\theta)\}_{X_i(t_{k-1}),X_i(t_k)}.
\]
This likelihood is tractable but can be expensive to optimize repeatedly,
because each new dataset requires repeated evaluation of matrix exponentials and
a new numerical maximization. AIM is designed to move this computational cost to
an offline simulation stage.

\subsection{Informative summary statistics}
\label{sec:summary}

The neural posterior is conditioned on a fixed-dimensional representation of the
panel data. To retain covariate information while avoiding a dimension that
grows with \(N\), we partition subjects into \(G\) covariate strata. In the
implementation used here, strata are formed by median splits of selected
covariates; if \(K\) covariates are used, then \(G\le 2^K\). Let \(N_g\) denote
the number of subjects in stratum \(g\), and let \(\bar z_g\) denote the
representative covariate value. Within stratum \(g\),
\[
Q_g(\theta)=Q(\bar z_g;\theta),
\qquad
P_k^{(g)}(\theta)=\exp\{Q_g(\theta)\Delta_k\}.
\]

The first component of the summary records observed transition movement. For
each interval, stratum, origin state, and destination state, define
\[
n_{rs}^{(k,g)}
=
\sum_{i\in g}
\mathbf 1\{X_i(t_{k-1})=r,\ X_i(t_k)=s\},
\]
and use the normalized joint transition proportion
\[
\widetilde p_{rs}^{(k,g)}
=
\frac{n_{rs}^{(k,g)}}{N_g}.
\]
The second component records state occupancy:
\[
\widehat\pi_r^{(g)}(t_k)
=
\frac1{N_g}
\sum_{i\in g}\mathbf 1\{X_i(t_k)=r\},
\qquad
k=0,\ldots,T.
\]
The third component records the covariate-stratum distribution,
\[
\omega_g=\frac{N_g}{N}.
\]
The summary vector \(S_N=S(Y_1,\ldots,Y_N)\in\mathbb R^d\) is obtained by
concatenating these quantities over \(k\), \(r\), \(s\), and \(g\).

These summaries are deliberately model-facing. The transition tables encode the
observed interval likelihood; State occupancies indicate which origin states are 
present at the beginning of each interval and therefore which rows of the transition 
matrices are informed by observed transitions. and allow joint transition
proportions to be converted into conditional transition probabilities; and
stratum proportions preserve the covariate composition of the sample. The
following lemma formalizes the likelihood information carried by the transition
count component. The proof is given in Appendix~\ref{app:sufficiency_proof}.

\begin{lemma}[Sufficiency of interval-specific transition counts]
\label{lem:sufficiency_counts}
Fix the observation schedule \(0=t_0<t_1<\cdots<t_T\). Suppose that subjects are
partitioned into covariate strata \(g=1,\ldots,G\), and that within each stratum
all subjects share the representative covariate value \(\bar z_g\). Thus they
share the same generator \(Q_g(\theta)=Q(\bar z_g;\theta)\) and the same interval
transition matrices
\[
P_k^{(g)}(\theta)=\exp\{Q_g(\theta)\Delta_k\}.
\]
Assume that the initial state distribution does not depend on \(\theta\), and
that subjects are independent conditional on their stratum membership. Then
\[
T(Y)
=
\{n_{rs}^{(k,g)}: k=1,\ldots,T,\ r,s=1,\ldots,S,\ g=1,\ldots,G\}
\]
is sufficient for \(\theta\) under the discretized-covariate observed panel likelihood.
\end{lemma}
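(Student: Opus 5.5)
The plan is to apply the Fisher--Neyman factorization theorem directly to the discretized-covariate observed panel likelihood. First write the joint density of the observed data under the stated assumptions. Conditional independence across subjects given stratum membership, together with the Markov property of each subject's process at the fixed visit times, gives a product over subjects of the initial-state probability and the interval transition probabilities:
\[
p_\theta(Y)=\prod_{i=1}^N \Pr\{X_i(t_0)\mid g(i)\}\prod_{k=1}^T\{P_k^{(g(i))}(\theta)\}_{X_i(t_{k-1}),X_i(t_k)},
\]
where \(g(i)\) denotes the stratum of subject \(i\). The stratum labels and covariates are treated as fixed, or as ancillary with a distribution free of \(\theta\), so any factor from them can be absorbed into the \(\theta\)-free part.

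Next, regroup the product over subjects. For each \(k\), \(g\), and pair \((r,s)\), the factor \(\{P_k^{(g)}(\theta)\}_{rs}\) appears exactly once for every subject in stratum \(g\) with \(X_i(t_{k-1})=r\) and \(X_i(t_k)=s\). The number of such subjects is \(n_{rs}^{(k,g)}\). Hence
\[
p_\theta(Y)=\underbrace{\prod_{i=1}^N \Pr\{X_i(t_0)\mid g(i)\}}_{h(Y)}\;\underbrace{\prod_{g=1}^G\prod_{k=1}^T\prod_{r,s=1}^S\bigl[\{P_k^{(g)}(\theta)\}_{rs}\bigr]^{n_{rs}^{(k,g)}}}_{g_\theta(T(Y))}.
\]
The factor \(h\) does not depend on \(\theta\), by the assumption on the initial distribution. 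The second factor depends on the data only through \(T(Y)\). This is exactly the factorization criterion, so \(T(Y)\) is sufficient for \(\theta\).

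The one step needing care is the treatment of zero entries. When \((r,s)\notin\mathcal E\) and cannot be reached within \(\Delta_k\), \(\{P_k^{(g)}\}_{rs}\) may vanish. I would adopt the convention \(0^0=1\). Then a count \(n_{rs}^{(k,g)}>0\) with a zero probability makes the likelihood zero, and this is still a function of \(T(Y)\) and \(\theta\), so the factorization remains valid pointwise in \(\theta\).

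I would also remark that the conclusion concerns the discretized-covariate working likelihood, in which each subject's generator is replaced by \(Q(\bar z_g;\theta)\). It is not a claim about the original likelihood with subject-specific \(z_i\). The argument is otherwise routine.
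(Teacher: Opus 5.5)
Your proposal is correct and follows essentially the same route as the paper's proof: the Markov property and conditional independence give a product of initial-state and interval transition probabilities, regrouping by \((k,g,r,s)\) yields \(\prod_{g,k,r,s}\{p_{rs}^{(k,g)}(\theta)\}^{n_{rs}^{(k,g)}}\), and the Fisher--Neyman factorization theorem finishes the argument. The paper factors stratum by stratum and then multiplies across strata, which is the same thing; your \(0^0=1\) convention for structurally zero probabilities and your remark that the claim concerns the discretized working likelihood rather than the subject-specific-\(z_i\) likelihood are sensible clarifications that the paper leaves implicit.
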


\subsection{Neural posterior approximation}
\label{sec:mdn}

Let \(\pi(\theta)\) denote the prior distribution. AIM targets the
summary-based posterior
\[
\pi(\theta\mid S_N)\propto p(S_N\mid\theta)\pi(\theta),
\]
without evaluating the generally intractable summary likelihood
\(p(S_N\mid\theta)\). The conditional density is approximated using a mixture
density network (MDN). For an input summary
\[
x=S_N,
\]
the network outputs mixture weights \(w_1(x),\ldots,w_H(x)\), component means
\(\mu_1(x),\ldots,\mu_H(x)\), and positive diagonal covariance matrices
\(\Sigma_1(x),\ldots,\Sigma_H(x)\). The approximate posterior is
\[
q_\phi(\theta\mid x)
=
\sum_{h=1}^H
w_h(x)\,
\mathcal N\{\theta;\mu_h(x),\Sigma_h(x)\},
\]
where \(w_h(x)\ge0\) and \(\sum_h w_h(x)=1\). In practice, the network enforces
these constraints through a softmax layer for the weights and log-scale outputs
for the diagonal standard deviations.

Training pairs are generated from the prior predictive distribution:
\[
\theta^{(m)}\sim\pi(\theta),
\qquad
Y^{(m)}\sim p(Y\mid\theta^{(m)}),
\qquad
x^{(m)}=S(Y^{(m)}),
\qquad
m=1,\ldots,M.
\]
The MDN is trained by minimizing the empirical negative conditional
log-likelihood
\[
L_M(\phi)
=
-\frac1M
\sum_{m=1}^M
\log q_\phi\{\theta^{(m)}\mid x^{(m)}\}.
\]
This objective is the standard proper scoring rule for conditional density
estimation: at the population optimum, the best possible network recovers the
posterior distribution of \(\theta\) given the summary.

\subsection{Offline training and posterior inference}
\label{sec:offline_online}

The defining feature of AIM is the separation between a computationally
intensive offline stage and a lightweight online stage. Offline, one simulates
many complete datasets from the chosen model class, converts each dataset into
the summary vector, and fits the MDN. Online, a new observed dataset is reduced
to the same summary vector and passed once through the trained network.

Given posterior samples
\[
\theta^{(1)},\ldots,\theta^{(B)}
\sim q_{\hat\phi}(\theta\mid S_N),
\]
we report the posterior mean
\[
\hat\theta_{\mathrm{mean}}
=
\frac1B\sum_{b=1}^B\theta^{(b)},
\]

Posterior standard deviations and credible intervals are computed from the
Monte Carlo draws:
\[
\widehat{\operatorname{sd}}(\theta)
=
\left[
\frac1{B-1}
\sum_{b=1}^B
\{\theta^{(b)}-\hat\theta_{\mathrm{mean}}\}^2
\right]^{1/2},
\]
with a \((1-\alpha)\times100\%\) credible interval given by the empirical
\(\alpha/2\) and \(1-\alpha/2\) quantiles.

The online computational cost is essentially the cost of constructing
\(S_N\), one forward pass through the MDN, and sampling from a finite Gaussian
mixture. For fixed \(S\), \(T\), and \(G\), summary construction is \(O(NT)\),
whereas the neural evaluation and posterior sampling are independent of the
optimization burden that appears in likelihood-based fitting. The offline cost
can be substantial, but it is paid once for a specified model class and can then
be amortized across repeated datasets, subgroups, resamples, or external
applications.

\begin{center}
\begin{minipage}{0.94\textwidth}
\hrule
\vspace{0.6em}
\noindent\textbf{Algorithm 1. AIM: Offline Training and Posterior Inference}
\vspace{0.4em}

\noindent\textbf{Inputs:} transition graph \(\mathcal E\), observation schedule
\(\{t_k\}_{k=0}^T\), prior \(\pi(\theta)\), simulator \(p(Y\mid\theta)\),
summary map \(S(\cdot)\), number of simulations \(M\), and number of posterior
draws \(B\).

\begin{enumerate}
\item \textbf{Offline simulation.} For \(m=1,\ldots,M\), draw
\(\theta^{(m)}\sim\pi(\theta)\), simulate \(Y^{(m)}\sim p(Y\mid\theta^{(m)})\),
and compute \(x^{(m)}=S(Y^{(m)})\).
\item \textbf{Offline training.} Fit the MDN by minimizing
\[
-M^{-1}\sum_{m=1}^M\log q_\phi\{\theta^{(m)}\mid x^{(m)}\}.
\]
\item \textbf{Online summarization.} For a new dataset \(Y_{\mathrm{obs}}\),
compute \(x_{\mathrm{obs}}=S(Y_{\mathrm{obs}})\).
\item \textbf{Online posterior inference.} Draw
\(\theta^{(1)},\ldots,\theta^{(B)}\sim
q_{\hat\phi}(\theta\mid x_{\mathrm{obs}})\), and report posterior means,
standard deviations, and credible intervals.
\end{enumerate}
\vspace{0.2em}
\hrule
\end{minipage}
\end{center}

\section{Theoretical Properties}
\label{sec:theory}

This section gives the statistical justification for the amortized procedure.
There are two distinct approximation steps. First, the full panel \(Y_N\) is
replaced by the fixed-dimensional summary \(S_N\). Second, the summary posterior
\(\Pi(\cdot\mid S_N)\) is approximated by the learned neural density
 \(q_{\hat\psi_M}(\cdot\mid S_N)\). We show that the summaries identify the
model parameters for the transition structures considered here, that the
summary posterior is consistent, and that a sufficiently accurate neural
approximation inherits this consistency.

Let \(\Theta\) denote the parameter space. Let \(S_N=S(Y_1,\ldots,Y_N)\) be the
summary vector computed from \(N\) observed subjects, and let
\(\Pi(\cdot\mid S_N)\) denote the posterior distribution induced by the prior
\(\pi(\theta)\) and the summary likelihood.

\subsection{Population summary identifiability}
\label{sec:population_identifiability}

We first define the population version of the summary vector. For stratum \(g\), let
\[
\pi_r^{(g)}(t_k;\theta)
=
\Pr_\theta\{X_i(t_k)=r\mid i\in g\}
\]
denote the marginal state probability at visit time \(t_k\). The population joint transition proportion is
\[
\widetilde p_{rs}^{(k,g)}(\theta)
=
\pi_r^{(g)}(t_{k-1};\theta)
\{P_k^{(g)}(\theta)\}_{rs}.
\]
Let
\[
S^*(\theta)
=
\left\{
\widetilde p_{rs}^{(k,g)}(\theta),\
\pi_r^{(g)}(t_k;\theta),\
\omega_g
:\ k=1,\ldots,T,\ k=0,\ldots,T,\ r,s=1,\ldots,S,\ g=1,\ldots,G
\right\}
\]
denote the population summary vector. The stratum proportions \(\omega_g\) are structural constants determined by the covariate stratification and do not depend on \(\theta\).

The following lemma shows that, under reachability and full-rank design conditions, the population summaries identify the model parameters.

\begin{lemma}[Population summary identifiability]
\label{lem:population_identifiability}
Suppose that the following conditions hold.
\begin{enumerate}
\item[(I1)] \emph{Reachability.} For every parameterized transition \((r,s)\in\mathcal E\), there exists at least one pair \((k,g)\) such that
\[
\pi_r^{(g)}(t_{k-1};\theta)>0
\]
for all \(\theta\in\Theta\).

\item[(I2)] \emph{Generator identifiability from interval transition matrices.} For the transition graphs considered, the collection of interval transition matrices \(\{P_k^{(g)}(\theta):k=1,\ldots,T\}\) uniquely determines \(Q_g(\theta)\) for each stratum \(g\). This condition holds for the acyclic progressive and competing-pathway structures considered in the simulations.

\item[(I3)] \emph{Full-rank stratum design.} Let
\[
Z^*
=
\begin{pmatrix}
1 & \bar z_1^\top\\
\vdots & \vdots\\
1 & \bar z_G^\top
\end{pmatrix}.
\]
Then
\[
\operatorname{rank}(Z^*)=p+1.
\]
\end{enumerate}
Then the map \(\theta\mapsto S^*(\theta)\) is injective. That is,
\[
S^*(\theta_1)=S^*(\theta_2)
\quad\Longrightarrow\quad
\theta_1=\theta_2.
\]
\end{lemma}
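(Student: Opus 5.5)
The argument runs in three steps: from summaries to transition-matrix entries, from transition matrices to generators, and from generators to the regression coefficients. Suppose \(S^*(\theta_1)=S^*(\theta_2)\).

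\emph{Step 1: recover transition-matrix entries.} Equality of the summary vectors gives, for every \((k,g,r,s)\),
\[
\pi_r^{(g)}(t_{k-1};\theta_1)=\pi_r^{(g)}(t_{k-1};\theta_2)
\quad\text{and}\quad
\widetilde p_{rs}^{(k,g)}(\theta_1)=\widetilde p_{rs}^{(k,g)}(\theta_2).
\]
Whenever \(\pi_r^{(g)}(t_{k-1};\theta)>0\), dividing the joint proportion by the occupancy gives
\[
\{P_k^{(g)}(\theta_1)\}_{rs}=\{P_k^{(g)}(\theta_2)\}_{rs}
\quad\text{for all } s.
\]
So every row \(r\) of \(P_k^{(g)}\) whose origin state is occupied at \(t_{k-1}\) is determined by the summaries. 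This is exactly the role the paper attributes to the occupancy component.

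\emph{Step 2: recover the generators (main obstacle).} Condition (I2) is phrased in terms of the full collection \(\{P_k^{(g)}\}\). Step 1, however, only recovers the rows with positive occupancy. I would read (I2), together with (I1), as saying that the recoverable rows suffice to pin down the entries \(q_{rs}(\bar z_g;\theta)\) of \(Q_g\) that the data inform.

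To make this concrete for the acyclic graphs, order the states topologically, so that \(Q_g\) is upper triangular. Take an interval with occupied origin \(r\). The diagonal entry \(\{P_k^{(g)}\}_{rr}=\exp(q_{rr}\Delta_k)\) gives the total exit rate \(-q_{rr}\). The first-step entries \(\{P_k^{(g)}\}_{rs}\) for the immediate successors \(s\) then solve, recursively in topological order, for the individual \(q_{rs}\). Each such entry is a strictly monotone function of \(q_{rs}\) once the downstream rates are fixed, because for triangular generators the matrix exponential has closed-form divided-difference entries.

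The limitation is that (I1) guarantees identification of \(q_{rs}(\bar z_g;\theta)\) only in \emph{some} stratum. To feed Step 3 I need \(\log q_{rs}\) identified in enough strata to span the design. I would therefore interpret (I1) and (I2) jointly as giving identification of \(q_{rs}(\bar z_g;\theta)\) for all \(g\). This holds automatically if every stratum starts with positive mass in the relevant origin state. I would state this reading explicitly.

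\emph{Step 3: recover \(\theta\) via the full-rank design.} For each \((r,s)\in\mathcal E\), Step 2 gives
\[
\beta_{rs,0}^{(1)}+\bar z_g^\top\beta_{rs}^{(1)}=\beta_{rs,0}^{(2)}+\bar z_g^\top\beta_{rs}^{(2)}\quad\text{for all } g,
\]
that is,
\[
Z^*\bigl\{(\beta_{rs,0},\beta_{rs}^\top)^{(1)}-(\beta_{rs,0},\beta_{rs}^\top)^{(2)}\bigr\}^\top=0.
\]
Condition (I3) says \(Z^*\) has full column rank \(p+1\), so the difference is zero. Since this holds for every edge in \(\mathcal E\), \(\theta_1=\theta_2\). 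The occupancy and \(\omega_g\) components play no further role beyond enabling the division in Step 1.
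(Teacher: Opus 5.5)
Your three steps are exactly the paper's proof: divide the joint proportions $\widetilde p_{rs}^{(k,g)}$ by the occupancies to recover the occupied rows of $P_k^{(g)}$, invoke (I2) to get $Q_g(\theta_1)=Q_g(\theta_2)$, then use the full column rank of $Z^*$ edge by edge. The paper asserts $Q_g(\theta_1)=Q_g(\theta_2)$ for every $g$ without addressing the point you flag in Step 2, even though (I1) guarantees only one reachable $(k,g)$ per transition and (I2) concerns full matrices rather than recovered rows; your explicit strengthened reading (identification of $q_{rs}(\bar z_g;\theta)$ in every stratum, or at least in strata whose rows of $Z^*$ have rank $p+1$) is genuinely required, since with a single edge $1\to 2$, strata $\bar z_1=0$, $\bar z_2=1$, and stratum~2 starting in the absorbing state~2, conditions (I1)--(I3) hold literally yet $\beta_{12,1}$ does not enter $S^*(\theta)$.
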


\begin{proof}
Suppose that \(S^*(\theta_1)=S^*(\theta_2)\). We prove that \(\theta_1=\theta_2\).

First, fix \((k,g,r)\) such that \(\pi_r^{(g)}(t_{k-1};\theta_j)>0\) for \(j=1,2\). Since both the population joint transition proportions and the marginal state proportions are included in \(S^*(\theta)\), equality of the population summaries implies
\[
\widetilde p_{rs}^{(k,g)}(\theta_1)=\widetilde p_{rs}^{(k,g)}(\theta_2),
\qquad
\pi_r^{(g)}(t_{k-1};\theta_1)=\pi_r^{(g)}(t_{k-1};\theta_2)
\]
for all destination states \(s\). Therefore,
\[
\{P_k^{(g)}(\theta_1)\}_{rs}
=
\frac{\widetilde p_{rs}^{(k,g)}(\theta_1)}{\pi_r^{(g)}(t_{k-1};\theta_1)}
=
\frac{\widetilde p_{rs}^{(k,g)}(\theta_2)}{\pi_r^{(g)}(t_{k-1};\theta_2)}
=
\{P_k^{(g)}(\theta_2)\}_{rs}.
\]
Thus the conditional transition probabilities are recovered on all reachable rows. By the reachability condition, every parameterized transition originates from a state that is reachable in at least one interval and stratum. Hence the rows of the interval transition matrices relevant to the parameterized transitions are identified from the population summaries.

By (I2), the collection of recovered interval transition matrices uniquely determines the stratum-specific generator. Therefore,
\[
Q_g(\theta_1)=Q_g(\theta_2)
\]
for every stratum \(g\). In particular, for each allowed transition \((r,s)\in\mathcal E\),
\[
q_{rs}(\bar z_g;\theta_1)=q_{rs}(\bar z_g;\theta_2),
\qquad
 g=1,\ldots,G.
\]
Taking logarithms gives
\[
\beta_{rs,0}^{(1)}+\bar z_g^\top\beta_{rs}^{(1)}
=
\beta_{rs,0}^{(2)}+\bar z_g^\top\beta_{rs}^{(2)},
\qquad
 g=1,\ldots,G.
\]
Equivalently,
\[
Z^*
\begin{pmatrix}
\beta_{rs,0}^{(1)}-\beta_{rs,0}^{(2)}\\
\beta_{rs}^{(1)}-\beta_{rs}^{(2)}
\end{pmatrix}
=0.
\]
Since \(Z^*\) has full column rank by (I3), the vector in parentheses is zero. Thus
\[
\beta_{rs,0}^{(1)}=\beta_{rs,0}^{(2)},
\qquad
\beta_{rs}^{(1)}=\beta_{rs}^{(2)}.
\]
Repeating this argument for every \((r,s)\in\mathcal E\) gives \(\theta_1=\theta_2\).
\end{proof}

\begin{remark}
The inclusion of marginal state proportions in the summary vector is important for identifiability. The empirical transition quantities used in the code are joint proportions of the form \(n_{rs}^{(k,g)}/N_g\). The marginal proportions allow these joint quantities to be normalized into conditional transition probabilities whenever the origin state is reachable.
\end{remark}

\begin{remark}
The generator-identifiability condition in Lemma~\ref{lem:population_identifiability} is natural for the simulation models considered here. In the three-state progressive model and the four-state competing-pathway model, the transition graph is acyclic, and after a suitable state ordering the generator is triangular. The diagonal entries of \(P_k^{(g)}(\theta)\) determine exit-rate combinations, while off-diagonal transition probabilities determine the allocation of exits among allowed transitions. Thus distinct parameter values induce distinct collections of interval transition matrices.
\end{remark}

\subsection{Summary posterior consistency}
\label{sec:summary_posterior_consistency}

We next show that the summary-based posterior concentrates around the true parameter value. Let \(\theta_0\in\Theta\) denote the data-generating parameter and define
\[
A_\varepsilon
=
\{\theta\in\Theta:\|\theta-\theta_0\|\ge\varepsilon\}.
\]

\begin{theorem}[Summary posterior consistency]
\label{thm:summary_posterior_consistency}
Suppose that the following conditions hold.
\begin{enumerate}
\item[(C1)] The parameter space \(\Theta\) is compact and \(\theta_0\in\operatorname{int}(\Theta)\).
\item[(C2)] The prior density \(\pi(\theta)\) is continuous and strictly positive in a neighborhood of \(\theta_0\).
\item[(C3)] The fixed-schedule and discretized-covariate likelihood is given by the product-multinomial likelihood induced by the interval-specific transition counts in Lemma~\ref{lem:sufficiency_counts}.
\item[(C4)] The transition probability vectors have common support and are continuous in \(\theta\), with positive probabilities bounded away from zero on their common non-structural support.
\item[(C5)] The population summary map is identifiable as in Lemma~\ref{lem:population_identifiability}.
\item[(C6)] The limiting proportions of subjects in rows used for identification are nondegenerate. That is,
\[
\frac{n_{r\cdot}^{(k,g)}}{N}
\stackrel p\longrightarrow
\rho_r^{(k,g)}(\theta_0)>0
\]
for the relevant rows.
\end{enumerate}
Then, for every \(\varepsilon>0\),
\[
\Pi(A_\varepsilon\mid S_N)
\stackrel p\longrightarrow
0.
\]
\end{theorem}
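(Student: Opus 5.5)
The argument is the classical Schwartz/Wald-type posterior consistency proof for a parametric product-multinomial model. The multinomial counts enter through the likelihood, and identifiability comes from Lemma~\ref{lem:population_identifiability}.

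\emph{Reduction to the count likelihood.} By (C3) and Lemma~\ref{lem:sufficiency_counts}, the posterior given \(S_N\) depends on the data through the counts \(n_{rs}^{(k,g)}\), and the likelihood is
\[
L_N(\theta)=\prod_{k,g,r}\prod_s \{P_k^{(g)}(\theta)\}_{rs}^{n_{rs}^{(k,g)}}.
\]
The occupancy and stratum components of \(S_N\) are functions of the counts (row sums, and the first-visit occupancy, whose law is \(\theta\)-free) and of the design. So we may work with
\[
\Pi(A_\varepsilon\mid S_N)=\frac{\int_{A_\varepsilon}\exp\{\ell_N(\theta)-\ell_N(\theta_0)\}\pi(\theta)\,d\theta}{\int_\Theta \exp\{\ell_N(\theta)-\ell_N(\theta_0)\}\pi(\theta)\,d\theta}.
\]

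\emph{Normalized log-likelihood ratio.} Write
\[
N^{-1}\{\ell_N(\theta)-\ell_N(\theta_0)\}=\sum_{k,g,r}\frac{n_{r\cdot}^{(k,g)}}{N}\sum_s \hat p_{rs}^{(k,g)}\log\frac{P_{k,rs}^{(g)}(\theta)}{P_{k,rs}^{(g)}(\theta_0)},
\]
where \(\hat p_{rs}^{(k,g)}=n_{rs}^{(k,g)}/n_{r\cdot}^{(k,g)}\).

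By (C6) and the law of large numbers for the conditional multinomials, \(\hat p_{rs}^{(k,g)}\to P_{k,rs}^{(g)}(\theta_0)\) in probability on the relevant rows. Hence this converges to
\[
-D(\theta)=-\sum \rho_r^{(k,g)}(\theta_0)\,\mathrm{KL}\bigl(P_{k,r\cdot}^{(g)}(\theta_0)\,\|\,P_{k,r\cdot}^{(g)}(\theta)\bigr).
\]
Convergence is uniform in \(\theta\in\Theta\). The reason is that (C4) makes \(\log P_{k,rs}^{(g)}(\theta)\) continuous and bounded on the common support, uniformly over compact \(\Theta\) (C1). The difference is then bounded by \(\sup_\theta|\log\text{ ratio}|\) times \(\sum|n_{rs}/N-\rho_r P_{rs}(\theta_0)|\to0\). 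Rows with \(\rho=0\) contribute only nonpositive terms asymptotically or vanish, so they can be handled by the same bound.

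\emph{Separation, which is the main obstacle.} I need \(\inf_{A_\varepsilon}D(\theta)=\delta>0\). The function \(D\) is continuous and \(A_\varepsilon\) is compact, so it suffices that \(D(\theta)=0\) forces \(\theta=\theta_0\). Now \(D(\theta)=0\) means \(P_k^{(g)}(\theta)\) agrees with \(P_k^{(g)}(\theta_0)\) on every row with \(\rho_r^{(k,g)}>0\). These are exactly the rows used in the proof of Lemma~\ref{lem:population_identifiability}, via (I1) and (C6). Running that argument from the recovered rows, through (I2) and then (I3), gives \(\theta=\theta_0\).

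The delicate point is that (C5) is stated as injectivity of the full map \(S^*\). I must check that agreement on the identifying rows alone suffices. The proof of Lemma~\ref{lem:population_identifiability} in fact only uses those rows, so I would invoke the proof rather than the bare statement.

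\emph{Conclusion.} On an event of probability tending to one, the numerator is at most \(e^{-N\delta/2}\). For the denominator, continuity of \(D\) at \(\theta_0\), where \(D(\theta_0)=0\), and uniform convergence give the following. For any \(\eta>0\) there is a neighborhood \(U\subset\Theta\) of \(\theta_0\), with positive prior mass by (C1)–(C2), on which the integrand is at least \(e^{-N\eta}\). Hence the denominator is at least \(\Pi(U)e^{-N\eta}\). Taking \(\eta=\delta/4\) yields
\[
\Pi(A_\varepsilon\mid S_N)\le \Pi(U)^{-1}e^{-N\delta/4}\to0
\]
in probability.
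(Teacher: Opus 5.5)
Your proposal is correct and follows essentially the same route as the paper's proof. Both arguments establish uniform convergence of the normalized log-likelihood ratio to minus a $\rho$-weighted sum of Kullback--Leibler divergences, obtain a separation constant on the compact set $A_\varepsilon$ from identifiability, and use exponential numerator and denominator bounds to get $\Pi(A_\varepsilon\mid S_N)\le C e^{-N\delta/4}$ with probability tending to one. Your more careful handling of two steps the paper leaves implicit is sound: the reduction from $S_N$ to the count likelihood, and applying (C5) only through rows with $\rho_r^{(k,g)}(\theta_0)>0$. The second step does not require reopening the proof of Lemma~\ref{lem:population_identifiability}; its bare statement suffices. Agreement of $P_k^{(g)}(\theta)$ and $P_k^{(g)}(\theta_0)$ on rows with $\pi_r^{(g)}(t_{k-1};\theta_0)>0$ propagates inductively from the $\theta$-free initial law to equality of all marginals and joint proportions, i.e.\ $S^*(\theta)=S^*(\theta_0)$.
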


\begin{proof}
By Lemma~\ref{lem:sufficiency_counts}, the observed panel likelihood can be written, up to a factor free of \(\theta\), as
\[
L_N(\theta)
=
\prod_{g=1}^G
\prod_{k=1}^T
\prod_{r=1}^S
\prod_{s=1}^S
\{p_{rs}^{(k,g)}(\theta)\}^{n_{rs}^{(k,g)}}.
\]
Define the normalized log-likelihood
\[
\ell_N(\theta)
=
\frac1N\log L_N(\theta)
=
\sum_{g,k,r,s}
\frac{n_{rs}^{(k,g)}}{N}
\log p_{rs}^{(k,g)}(\theta).
\]
Under \(P_{\theta_0}\), the law of large numbers gives
\[
\frac{n_{rs}^{(k,g)}}{N}
\stackrel p\longrightarrow
\rho_r^{(k,g)}(\theta_0)
 p_{rs}^{(k,g)}(\theta_0).
\]
Hence \(\ell_N(\theta)\) converges uniformly in probability to
\[
\ell(\theta)
=
\sum_{g,k,r}
\rho_r^{(k,g)}(\theta_0)
\sum_s
p_{rs}^{(k,g)}(\theta_0)
\log p_{rs}^{(k,g)}(\theta),
\]
where uniformity follows from compactness, continuity, and the bounded log-probability condition.

For any \(\theta\in\Theta\),
\[
\ell(\theta)-\ell(\theta_0)
=
-
\sum_{g,k,r}
\rho_r^{(k,g)}(\theta_0)
KL\{p_r^{(k,g)}(\theta_0)\|p_r^{(k,g)}(\theta)\}
\le 0.
\]
By Lemma~\ref{lem:population_identifiability}, if \(\theta\ne\theta_0\), at least one relevant transition probability vector differs, and hence at least one KL divergence is strictly positive. Thus \(\ell(\theta)<\ell(\theta_0)\) for \(\theta\ne\theta_0\). Since \(A_\varepsilon\) is compact, there exists \(c_\varepsilon>0\) such that
\[
\sup_{\theta\in A_\varepsilon}
\{\ell(\theta)-\ell(\theta_0)\}
\le
-c_\varepsilon.
\]
Uniform convergence then implies that, with probability tending to one,
\[
\sup_{\theta\in A_\varepsilon}
\{\ell_N(\theta)-\ell_N(\theta_0)\}
\le
-\frac{c_\varepsilon}{2}.
\]
Equivalently,
\[
\sup_{\theta\in A_\varepsilon}
\frac{L_N(\theta)}{L_N(\theta_0)}
\le
\exp\left(-\frac{Nc_\varepsilon}{2}\right).
\]

The posterior mass of \(A_\varepsilon\) is
\[
\Pi(A_\varepsilon\mid S_N)
=
\frac{\int_{A_\varepsilon}L_N(\theta)\pi(\theta)d\theta}
{\int_\Theta L_N(\theta)\pi(\theta)d\theta}.
\]
The numerator is bounded by
\[
L_N(\theta_0)
\exp\left(-\frac{Nc_\varepsilon}{2}\right).
\]
For the denominator, by positivity of the prior near \(\theta_0\) and continuity of \(\ell(\theta)\), there exists a ball \(B_\delta(\theta_0)\) and a constant \(C_\delta>0\) such that, with probability tending to one,
\[
\int_\Theta L_N(\theta)\pi(\theta)d\theta
\ge
C_\delta L_N(\theta_0)
\exp\left(-\frac{Nc_\varepsilon}{4}\right).
\]
Combining the numerator and denominator bounds yields
\[
\Pi(A_\varepsilon\mid S_N)
\le
C_\delta^{-1}
\exp\left(-\frac{Nc_\varepsilon}{4}\right),
\]
with probability tending to one. The right-hand side converges to zero, proving the result.
\end{proof}

\subsection{Simulation consistency of the neural posterior}
\label{sec:simulation_consistency}

Let
\[
(\theta_i,S_i),\qquad i=1,\ldots,M,
\]
be simulated training pairs generated from
\[
p(\theta,S)=\pi(\theta)p(S\mid\theta).
\]
The MDN is trained by minimizing
\[
L_M(\psi)
=
-\frac1M\sum_{i=1}^M
\log q_\psi(\theta_i\mid S_i).
\]
Define
\[
L(\psi)=E_{(\theta,S)}[-\log q_\psi(\theta\mid S)].
\]

\begin{theorem}[Simulation consistency]
\label{thm:sim_consistency}
Suppose that \(\Psi\) is compact, \(q_\psi(\theta\mid S)\) is continuous and strictly positive in \(\psi\), and the loss is dominated by an integrable envelope. Suppose further that the neural density class is rich enough to contain the summary-based posterior in its closure. Let
\[
\hat\psi_M\in\arg\min_{\psi\in\Psi}L_M(\psi).
\]
Then
\[
E_S
\left[
KL\{\Pi(\cdot\mid S)\|q_{\hat\psi_M}(\cdot\mid S)\}
\right]
\stackrel p\longrightarrow 0.
\]
\end{theorem}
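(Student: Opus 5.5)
The argument follows the standard M-estimation route for maximum-likelihood-type estimators. The key step is to rewrite the population risk as an expected KL divergence plus a constant. For any \(\psi\),
\[
L(\psi)=E_S\Big[\int \Pi(\theta\mid S)\{-\log q_\psi(\theta\mid S)\}\,d\theta\Big]
=E_S\big[KL\{\Pi(\cdot\mid S)\|q_\psi(\cdot\mid S)\}\big]+H,
\]
where \(H=E_S[-\int\Pi(\theta\mid S)\log\Pi(\theta\mid S)d\theta]\) does not depend on \(\psi\). We assume \(H\) is finite; this is part of what the integrable envelope buys. Hence
\[
E_S[KL\{\Pi(\cdot\mid S)\|q_{\hat\psi_M}(\cdot\mid S)\}]=L(\hat\psi_M)-H .
\]
It therefore suffices to show two things: \(\inf_{\psi\in\Psi}L(\psi)=H\), and \(L(\hat\psi_M)\to\inf_\Psi L\) in probability.

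\textbf{Step 1 (approximation).} The richness assumption says the summary posterior lies in the closure of \(\{q_\psi\}\). I interpret this as the existence of \(\psi_j\in\Psi\) with \(E_S[KL\{\Pi(\cdot\mid S)\|q_{\psi_j}(\cdot\mid S)\}]\to0\). The identity above then gives \(\inf_\Psi L=H\). Compactness of \(\Psi\) and continuity of \(L\) (Step 2) show that the infimum is attained, at some \(\psi^*\) with \(q_{\psi^*}=\Pi\) almost everywhere. Attainment is not actually needed for what follows.

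\textbf{Step 2 (uniform law of large numbers).} The losses \(-\log q_\psi(\theta\mid S)\) are continuous in \(\psi\), since \(q_\psi>0\) and is continuous. They are dominated by an integrable envelope \(F(\theta,S)\), and \(\Psi\) is compact. Under these conditions the standard uniform LLN for parametric classes (e.g.\ Newey--McFadden Lemma 2.4, or a bracketing argument) gives two conclusions:
\[
\sup_{\psi\in\Psi}|L_M(\psi)-L(\psi)|\stackrel p\longrightarrow0,
\]
and \(L\) is continuous on \(\Psi\) by dominated convergence. The training pairs are i.i.d.\ from \(\pi(\theta)p(S\mid\theta)\), so the LLN applies directly.

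\textbf{Step 3 (argmin inequality).} Fix any \(\psi'\in\Psi\). Since \(\hat\psi_M\) minimizes \(L_M\),
\[
L(\hat\psi_M)\le L_M(\hat\psi_M)+o_p(1)\le L_M(\psi')+o_p(1)=L(\psi')+o_p(1).
\]
Taking \(\psi'\) with \(L(\psi')\) arbitrarily close to \(\inf_\Psi L=H\) gives \(L(\hat\psi_M)-H\stackrel p\to0\). Because the KL term is nonnegative, Step 1's identity then yields the claim.

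\textbf{Main obstacle.} The hard part is reconciling the closure assumption with the compactness of \(\Psi\) and the envelope condition. The approximating sequence \(\psi_j\) must lie in \(\Psi\) itself, and the envelope must also control \(-\log\Pi(\theta\mid S)\), so that \(H<\infty\) and the KL decomposition is legitimate. I would handle this by stating explicitly that closure is taken within \(\{q_\psi:\psi\in\Psi\}\) in the expected-KL sense. Finiteness of \(H\) would then follow from domination, via \(H\le L(\psi)\le E[F]\). If instead the closure were only in a weaker topology, such as pointwise convergence, I would use Fatou's lemma with the envelope to upgrade it.
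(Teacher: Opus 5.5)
Your proposal is correct and follows the paper's proof step for step: you decompose \(L(\psi)=H+E_S[KL\{\Pi(\cdot\mid S)\|q_\psi(\cdot\mid S)\}]\), use the closure assumption to get \(\inf_\Psi L=H\), obtain a uniform LLN from compactness, continuity and the envelope, and use the argmin inequality to conclude \(L(\hat\psi_M)-H\stackrel p\longrightarrow0\). Your explicit reading of the closure condition in the expected-KL sense and your check that \(H\) is finite make precise two points the paper leaves implicit.
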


\begin{proof}
By the joint factorization \(p(\theta,S)=\pi(\theta\mid S)p(S)\),
\[
L(\psi)
=
\int p(S)
\left[\int -\log q_\psi(\theta\mid S)\pi(\theta\mid S)d\theta\right]dS.
\]
Adding and subtracting the posterior entropy gives
\[
L(\psi)
=
H+E_S[KL\{\Pi(\cdot\mid S)\|q_\psi(\cdot\mid S)\}],
\]
where \(H=-E[\log\pi(\theta\mid S)]\) does not depend on \(\psi\). Under the approximation assumption, the infimum of \(L(\psi)\) equals \(H\).

The compactness, continuity, and envelope assumptions imply a uniform law of large numbers for \(L_M(\psi)\). Standard argmin consistency therefore gives
\[
L(\hat\psi_M)-\inf_{\psi\in\Psi}L(\psi)
\stackrel p\longrightarrow0.
\]
Using the preceding decomposition, the claimed expected KL convergence follows.
\end{proof}

\subsection{Neural posterior consistency}
\label{sec:neural_posterior_consistency}

We now combine summary posterior consistency with neural posterior approximation.

\begin{theorem}[Neural posterior consistency]
\label{thm:neural_posterior_consistency}
Suppose that the assumptions of Theorem~\ref{thm:summary_posterior_consistency} hold. Suppose further that, along the observed summary sequence,
\[
KL\{\Pi(\cdot\mid S_N)\|q_{\hat\psi_M}(\cdot\mid S_N)\}
\stackrel p\longrightarrow0
\]
as \(M\to\infty\). Then, for every \(\varepsilon>0\),
\[
q_{\hat\psi_M}(\|\theta-\theta_0\|\ge\varepsilon\mid S_N)
\stackrel p\longrightarrow0,
\]
as \(N\to\infty\) and \(M\to\infty\). Equivalently,
\[
q_{\hat\psi_M}(\cdot\mid S_N)
\Rightarrow
\delta_{\theta_0}
\]
in probability.
\end{theorem}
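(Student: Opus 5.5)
The plan is a triangle-inequality argument in total variation, combined with Pinsker's inequality. Fix \(\varepsilon>0\) and write \(A_\varepsilon=\{\theta:\|\theta-\theta_0\|\ge\varepsilon\}\) as in Theorem~\ref{thm:summary_posterior_consistency}. For any two probability measures \(P,Q\) on \(\Theta\) and any measurable set \(A\),
\[
Q(A)\le P(A)+\|P-Q\|_{\mathrm{TV}}\le P(A)+\sqrt{\tfrac12 KL(P\|Q)}.
\]
Take \(P=\Pi(\cdot\mid S_N)\), \(Q=q_{\hat\psi_M}(\cdot\mid S_N)\) and \(A=A_\varepsilon\). This gives the pointwise (in the data) bound
\[
q_{\hat\psi_M}(A_\varepsilon\mid S_N)\le \Pi(A_\varepsilon\mid S_N)+\Bigl[\tfrac12 KL\{\Pi(\cdot\mid S_N)\|q_{\hat\psi_M}(\cdot\mid S_N)\}\Bigr]^{1/2}.
\]

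The two terms on the right are handled separately. The first tends to zero in probability by Theorem~\ref{thm:summary_posterior_consistency}, whose assumptions are in force. The second tends to zero in probability by the additional hypothesis, together with the continuous mapping theorem applied to \(x\mapsto\sqrt{x/2}\). A sum of two terms that each converge to zero in probability does so as well. Since the left-hand side is nonnegative, we get \(q_{\hat\psi_M}(A_\varepsilon\mid S_N)\stackrel p\to0\).

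The equivalent statement \(q_{\hat\psi_M}(\cdot\mid S_N)\Rightarrow\delta_{\theta_0}\) in probability follows from the standard characterization of weak convergence to a point mass. For a sequence of random probability measures \(\mu_N\), one has \(\mu_N\Rightarrow\delta_{\theta_0}\) in probability if and only if \(\mu_N(A_\varepsilon)\stackrel p\to0\) for every \(\varepsilon\). For example, for bounded Lipschitz \(f\) one bounds \(|\int f\,d\mu_N-f(\theta_0)|\le \mathrm{Lip}(f)\,\varepsilon+2\|f\|_\infty\,\mu_N(A_\varepsilon)\), and then lets \(\varepsilon\downarrow0\). One can also use a countable dense family of such \(f\) to metrize weak convergence.

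The main obstacle is making the double limit \(N,M\to\infty\) precise. The KL hypothesis is stated along the observed summary sequence as \(M\to\infty\), whereas the summary-posterior result is an \(N\to\infty\) statement. I would interpret the joint limit as follows: for every \(\eta>0\) there exist \(N_0,M_0\) such that the probability that either term exceeds \(\eta\) is small whenever \(N\ge N_0\) and \(M\ge M_0\). This amounts to reading the hypothesis as holding uniformly in \(N\), or equivalently along any sequence \(M=M(N)\to\infty\). I would state this interpretation explicitly. The data \(S_N\) and the independent simulation randomness defining \(\hat\psi_M\) live on a product probability space, and the Pinsker bound holds pointwise on that space, so the union bound on the two exceedance events finishes the argument.
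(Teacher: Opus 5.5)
Your proof is correct and follows the paper's argument. Both use Pinsker's inequality to bound $|q_{\hat\psi_M}(A_\varepsilon\mid S_N)-\Pi(A_\varepsilon\mid S_N)|$ by the total variation distance, and then Theorem~\ref{thm:summary_posterior_consistency} to send $\Pi(A_\varepsilon\mid S_N)$ to zero. Your explicit reading of the joint $N,M\to\infty$ limit and your bounded-Lipschitz argument for the equivalence with $q_{\hat\psi_M}(\cdot\mid S_N)\Rightarrow\delta_{\theta_0}$ supply details the paper leaves implicit.
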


\begin{proof}
Fix \(\varepsilon>0\) and let \(A_\varepsilon=\{\theta:\|\theta-\theta_0\|\ge\varepsilon\}\). By Pinsker's inequality,
\[
\|q_{\hat\psi_M}(\cdot\mid S_N)-\Pi(\cdot\mid S_N)\|_{\mathrm{TV}}
\le
\left[
\frac12
KL\{\Pi(\cdot\mid S_N)\|q_{\hat\psi_M}(\cdot\mid S_N)\}
\right]^{1/2}.
\]
The right-hand side converges to zero in probability. Hence
\[
|q_{\hat\psi_M}(A_\varepsilon\mid S_N)-\Pi(A_\varepsilon\mid S_N)|
\le
\|q_{\hat\psi_M}(\cdot\mid S_N)-\Pi(\cdot\mid S_N)\|_{\mathrm{TV}}
=o_p(1).
\]
By Theorem~\ref{thm:summary_posterior_consistency},
\[
\Pi(A_\varepsilon\mid S_N)\stackrel p\longrightarrow0.
\]
Therefore
\[
q_{\hat\psi_M}(A_\varepsilon\mid S_N)
=\Pi(A_\varepsilon\mid S_N)+o_p(1)
\stackrel p\longrightarrow0.
\]
This proves concentration on every neighborhood of \(\theta_0\), which is equivalent to weak convergence in probability to \(\delta_{\theta_0}\).
\end{proof}

\subsection{Posterior mean consistency}
\label{sec:posterior_mean_consistency}

\begin{theorem}[Posterior mean consistency]
\label{thm:posterior_mean_consistency}
Under the assumptions of Theorem~\ref{thm:neural_posterior_consistency}, define
\[
\hat\theta_N
=
\int_\Theta \theta\,q_{\hat\psi_M}(\theta\mid S_N)d\theta.
\]
Then
\[
\hat\theta_N\stackrel p\longrightarrow\theta_0.
\]
\end{theorem}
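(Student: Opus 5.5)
The plan is to pass from concentration of the neural posterior (Theorem~\ref{thm:neural_posterior_consistency}) to convergence of its first moment, using compactness of \(\Theta\) from (C1) to control the tails. Write \(q_N(\cdot)=q_{\hat\psi_M}(\cdot\mid S_N)\), and interpret it as restricted to \(\Theta\), as in the definition of \(\hat\theta_N\). Let \(D=\sup_{\theta\in\Theta}\|\theta-\theta_0\|<\infty\). Fix \(\varepsilon>0\) and split the integral over \(B_\varepsilon(\theta_0)\) and \(A_\varepsilon\):
\[
\|\hat\theta_N-\theta_0\|
=\Bigl\|\int_\Theta(\theta-\theta_0)\,q_N(\theta)\,d\theta\Bigr\|
\le \int_{\|\theta-\theta_0\|<\varepsilon}\|\theta-\theta_0\|\,q_N(\theta)\,d\theta+\int_{A_\varepsilon}\|\theta-\theta_0\|\,q_N(\theta)\,d\theta .
\]
This gives
\[
\|\hat\theta_N-\theta_0\|\le \varepsilon+D\,q_N(A_\varepsilon).
\]

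By Theorem~\ref{thm:neural_posterior_consistency}, \(q_N(A_\varepsilon)\to0\) in probability. Hence for every \(\eta>0\),
\[
\Pr\{\|\hat\theta_N-\theta_0\|>2\varepsilon\}\le \Pr\{q_N(A_\varepsilon)>\varepsilon/D\}\to0 .
\]
Since \(\varepsilon\) is arbitrary, \(\hat\theta_N\to\theta_0\) in probability.

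The main obstacle is that the MDN density \(q_\psi\) is a Gaussian mixture on \(\mathbb R^d\), so its support is not \(\Theta\). If \(\hat\theta_N\) is instead read as the full-space mean, the bound above fails because of unbounded tails. The first fix is to argue that the relevant quantities live on \(\Theta\). The KL condition \(KL\{\Pi\|q\}\to0\) only controls \(q\) in total variation (via Pinsker, as in the proof of Theorem~\ref{thm:neural_posterior_consistency}), and total variation gives no control of the first moment outside \(\Theta\).

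If the integral is over \(\mathbb R^d\), I would add a uniform integrability condition. One option is \(\sup_N E_{q_N}\|\theta\|^{1+\delta}=O_p(1)\), which holds if the MDN means and variances are bounded, as they are when \(\Psi\) is compact. With that condition I would bound the tail contribution by Hölder's inequality:
\[
\int_{A_\varepsilon}\|\theta-\theta_0\|\,q_N\le (E_{q_N}\|\theta-\theta_0\|^{1+\delta})^{1/(1+\delta)}\,q_N(A_\varepsilon)^{\delta/(1+\delta)}=o_p(1).
\]
This recovers the same conclusion.

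Since the theorem as stated integrates over \(\Theta\), I will present the compact-support argument and note the truncation and uniform integrability caveat in a remark.
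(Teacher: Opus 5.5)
Your argument is correct and is essentially the paper's proof. The paper invokes weak convergence $q_{\hat\psi_M}(\cdot\mid S_N)\Rightarrow\delta_{\theta_0}$ from Theorem~\ref{thm:neural_posterior_consistency} together with boundedness and continuity of $\theta\mapsto\theta$ on compact $\Theta$, and your bound $\|\hat\theta_N-\theta_0\|\le\varepsilon+D\,q_N(A_\varepsilon)$ is just the explicit $\varepsilon$-split behind that step. On the support issue you raise: if the mixture is truncated to $\Theta$ without renormalization there is an extra term $-\theta_0\,q_N(\Theta^c)$, but since $\theta_0\in\operatorname{int}(\Theta)$ one has $\Theta^c\subset A_{\varepsilon_0}$ for some $\varepsilon_0>0$, so this term (and any renormalizing constant) is also controlled by Theorem~\ref{thm:neural_posterior_consistency}.
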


\begin{proof}
Since \(\Theta\) is compact, the identity map \(f(\theta)=\theta\) is bounded and continuous. By Theorem~\ref{thm:neural_posterior_consistency},
\[
q_{\hat\psi_M}(\cdot\mid S_N)
\Rightarrow
\delta_{\theta_0}
\]
in probability. Therefore,
\[
\int_\Theta \theta\,q_{\hat\psi_M}(\theta\mid S_N)d\theta
\longrightarrow
\int_\Theta \theta\,d\delta_{\theta_0}
=
\theta_0,
\]
in probability.
\end{proof}

\section{Simulation studies}
\label{sec:simulation}

We conducted simulation studies to evaluate whether AIM can recover the
transition-intensity parameters of continuous time multistate models while
delivering the intended computational gain after offline training. The
experiments were designed to separate three issues: accuracy of posterior point
estimation, calibration of uncertainty intervals, and online cost relative to a
standard likelihood-based implementation. The \texttt{msm} package was used as
the benchmark because it fits the same continuous-time Markov models by
dataset-specific likelihood optimization.

\subsection{Simulation settings}

Three scenarios were considered, increasing in dimension and transition-graph
complexity.

\paragraph{Scenario A: Three-state progressive model without covariates.}

The baseline setting was a progressive three-state model with transition
structure

\[
1\rightarrow2,\qquad
1\rightarrow3,\qquad
2\rightarrow3.
\]

The transition intensities were parameterized as

\[
q_{rs}
=
\exp(\beta_{rs,0}),
\]

with true parameter values

\[
\beta_{12,0}=-0.6,\qquad
\beta_{13,0}=-1.0,\qquad
\beta_{23,0}=-0.2.
\]

\paragraph{Scenario B: Three-state model with covariates.}

The second setting used the same transition graph but included two independent
covariates

\[
Z_1,Z_2\sim N(0,1)
\]

were incorporated through a proportional intensity model,

\[
q_{rs}(z)
=
\exp\left(
\beta_{rs,0}
+
\beta_{rs,1}Z_1
+
\beta_{rs,2}Z_2
\right).
\]

The true parameter vector was

\[
(-0.6,0.5,-0.3,
-1.0,0.4,0.2,
-0.2,-0.5,0.6).
\]

\paragraph{Scenario C: Four-state competing pathway model.}

The third setting was a four-state competing-pathway model with transition
structure

\[
1\rightarrow2,\qquad
1\rightarrow3,\qquad
2\rightarrow4,\qquad
3\rightarrow4,
\]

together with two covariates. Transition intensities again followed the proportional intensity model above. The corresponding true parameter vector was

\[
(-0.8,0.5,-0.3,
-1.0,0.4,0.2,
-0.4,-0.5,0.6,
-0.3,0.2,0.5).
\]

For all scenarios, data from \(N=5000\) subjects were generated and observed
only at the fixed visit times

\[
(0,\;0.5,\;1,\;1.5,\;2),
\]

resulting in interval-censored panel data. The AIM and \texttt{msm} analyses
used the same simulated panels and the same transition structures.

\subsection{Training procedure}

For each scenario, AIM was trained once using synthetic datasets generated from
the prior predictive distribution,

\[
\theta^{(m)}
\sim
\pi(\theta),
\]

followed by

\[
X^{(m)}
\sim
p(X\mid\theta^{(m)}),
\]

for

\[
m=1,\ldots,50000.
\]

The summaries in Section~\ref{sec:method} were computed for each simulated
dataset and used as MDN inputs. The network consisted of two hidden layers with
256 neurons and twelve Gaussian mixture components. Parameters were estimated
by minimizing the negative conditional log-likelihood using the Adam optimizer
with learning rate \(10^{-3}\). After this offline stage, the trained network
was fixed and applied unchanged to all test datasets from the corresponding
scenario.

\subsection{Performance measures}

For each scenario, 100 independent test datasets were generated from the true
parameter values. For each test dataset, posterior samples were drawn from AIM
and likelihood-based estimates and Wald intervals were obtained from
\texttt{msm}. We report, parameter by parameter,

\begin{enumerate}
\item the average point estimate;
\item the average posterior or sampling standard deviation;
\item empirical coverage probability of the nominal \(95\%\) interval.
\end{enumerate}

For AIM, intervals are posterior credible intervals; for \texttt{msm}, they are
Wald confidence intervals based on the fitted likelihood. Online runtime was
measured after AIM training, so it reflects only the cost required for a new
dataset.

\subsection{Results}

Tables~\ref{tab:simulation_3state} and~\ref{tab:simulation_4state} summarize
the inferential results. In all three scenarios, AIM posterior means tracked the
true parameters closely and were comparable to the likelihood-based
\texttt{msm} estimates. In the three-state model without covariates, AIM
estimated the three log-intensities as \(-0.605\), \(-0.986\), and \(-0.199\),
compared with true values \(-0.600\), \(-1.000\), and \(-0.200\). The
corresponding \texttt{msm} estimates were \(-0.601\), \(-1.000\), and
\(-0.200\). Coverage was near nominal for all three parameters.

In the covariate-adjusted three-state model, AIM recovered both intercepts and
covariate effects with small bias. The six covariate effects
\((0.500,-0.300,0.400,0.200,-0.500,0.600)\) were estimated as
\((0.472,-0.315,0.409,0.199,-0.514,0.584)\). The \texttt{msm} estimates were
also close to the truth and had coverage closer to 0.95. AIM coverage ranged
from 0.850 to 0.970 in this setting.

The four-state competing-pathway model provides the strongest stress test,
because the parameter dimension increases to twelve and the transition graph
contains two competing routes to the terminal state. AIM remained stable:
posterior means were close to the true values for the competing transitions
\(1\to2\) and \(1\to3\) and for the downstream transitions \(2\to4\) and
\(3\to4\). Coverage ranged from 0.870 to 0.990 for AIM and from 0.910 to 0.980
for \texttt{msm}. Thus, the amortized estimator preserved good point estimation
performance as model complexity increased, with some loss of interval
calibration relative to direct likelihood fitting.

Figure~\ref{fig:simulation} shows that AIM posterior mean estimates closely follow the true parameter values across all three simulation scenarios and are broadly comparable to the likelihood-based \texttt{msm} estimates. Empirical coverage probabilities are generally near the nominal 95\% level, with modest parameter-specific variation in the covariate and four-state settings.

\begin{figure}[htbp]
\centering
\includegraphics[width=\textwidth]{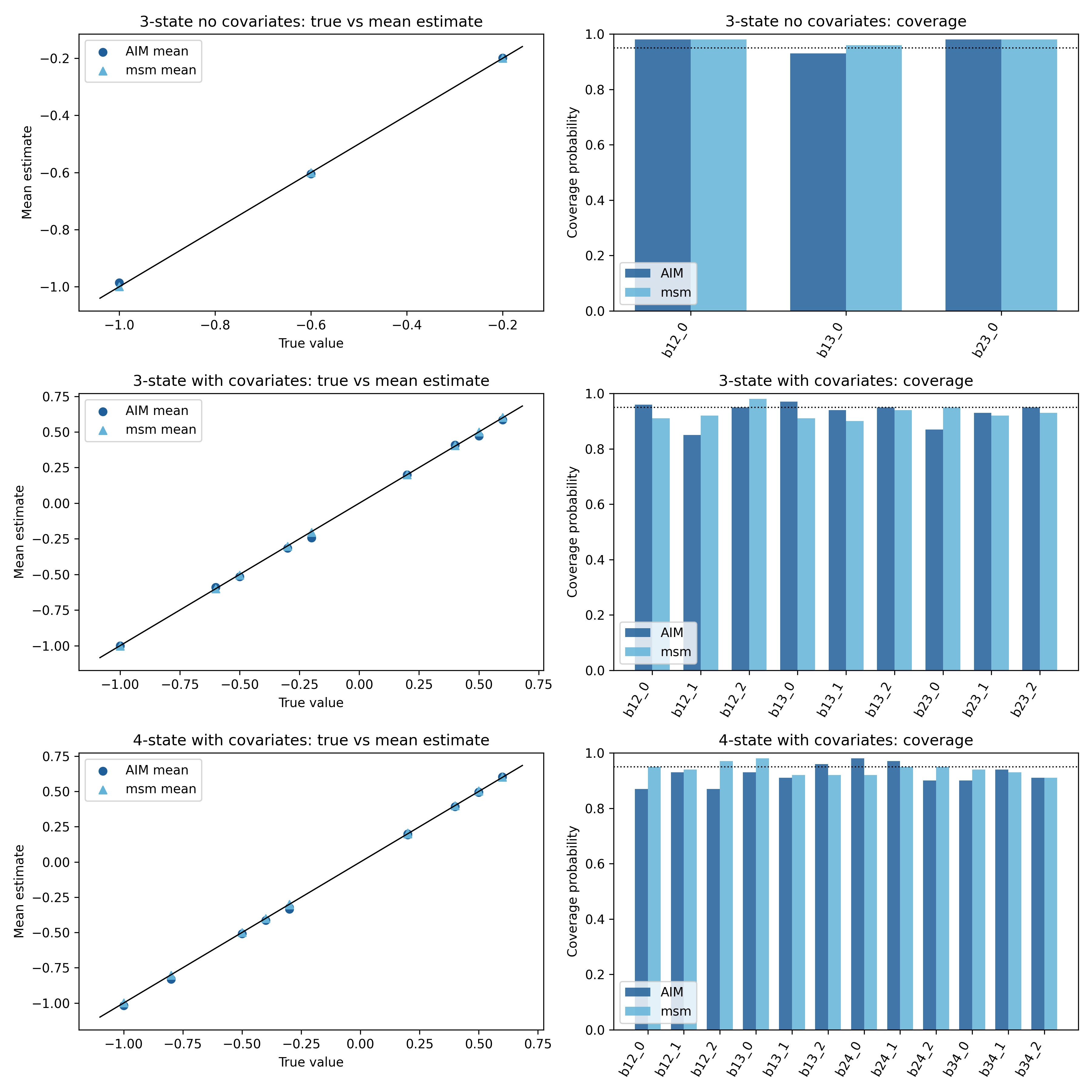}
\caption{
Simulation performance of AIM and the likelihood-based \texttt{msm} estimator across the three data-generating scenarios. Left panels compare the average point estimates with the true parameter values; the diagonal line indicates perfect agreement. Right panels show empirical coverage probabilities of nominal 95\% intervals, with the dotted horizontal line indicating the nominal level. AIM posterior mean estimates closely track the true values and are broadly comparable to \texttt{msm}, with coverage generally near the nominal level across scenarios.
}
\label{fig:simulation}
\end{figure}

As shown in Figure~\ref{fig:online_time}, AIM substantially reduced online inference time relative to \texttt{msm}. Across replicated datasets, AIM performed inference in milliseconds, while \texttt{msm} required repeated likelihood optimization. Median speedups were approximately 154-fold, 964-fold, and 2308-fold across the three simulation scenarios, with the largest gain observed in the four-state competing-pathway model.

\begin{figure}[htbp]
\centering
\includegraphics[width=\textwidth]{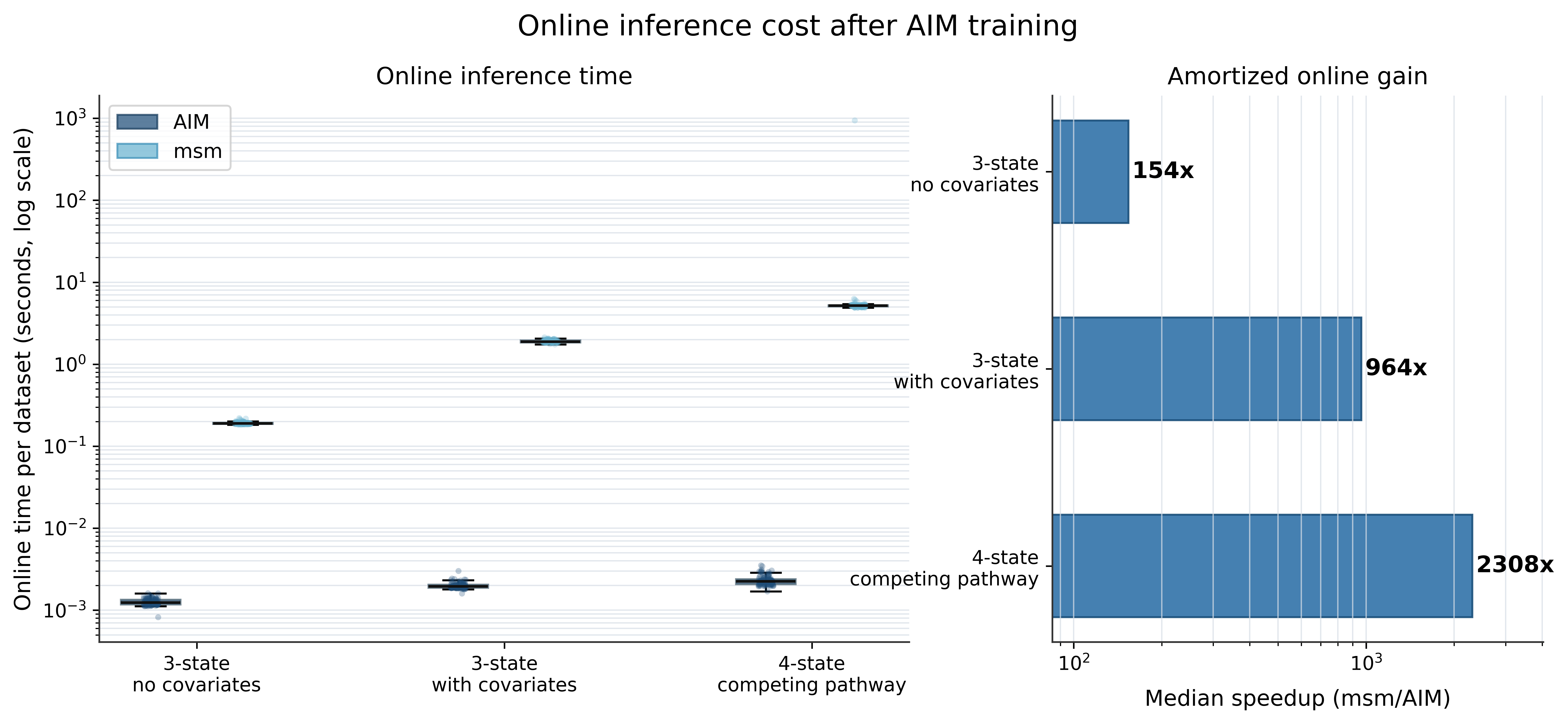}
\caption{
Online inference time for AIM and the likelihood-based \texttt{msm} estimator across the simulation scenarios. Times are shown on a logarithmic scale, with boxplots summarizing variation across replicated test datasets. Numbers above each scenario indicate the median speedup, defined as the ratio of \texttt{msm} runtime to AIM runtime. After offline training, AIM performs inference in milliseconds, whereas \texttt{msm} requires a separate likelihood optimization for each dataset.
}
\label{fig:online_time}
\end{figure}

\begin{table}[htbp]
\centering
\caption{Simulation results for the three-state models with and without covariates.}
\label{tab:simulation_3state}
\resizebox{\textwidth}{!}{%
\begin{tabular}{llrrrrrrr}
\toprule
Model & Parameter & True & AIM Mean & msm Mean & AIM SD & msm SD & AIM Cov. & msm Cov. \\
\midrule
3-state no covariates & b12\_0 & -0.600 & -0.605 & -0.601 & 0.022 & 0.022 & 0.980 & 0.980 \\
3-state no covariates & b13\_0 & -1.000 & -0.986 & -1.000 & 0.030 & 0.029 & 0.930 & 0.960 \\
3-state no covariates & b23\_0 & -0.200 & -0.199 & -0.200 & 0.029 & 0.029 & 0.980 & 0.980 \\
3-state with covariates & b12\_0 & -0.600 & -0.590 & -0.601 & 0.027 & 0.027 & 0.960 & 0.910 \\
3-state with covariates & b12\_1 & 0.500 & 0.472 & 0.500 & 0.037 & 0.025 & 0.850 & 0.920 \\
3-state with covariates & b12\_2 & -0.300 & -0.315 & -0.303 & 0.030 & 0.025 & 0.950 & 0.980 \\
3-state with covariates & b13\_0 & -1.000 & -1.001 & -1.004 & 0.038 & 0.036 & 0.970 & 0.910 \\
3-state with covariates & b13\_1 & 0.400 & 0.409 & 0.405 & 0.044 & 0.035 & 0.940 & 0.900 \\
3-state with covariates & b13\_2 & 0.200 & 0.199 & 0.199 & 0.038 & 0.031 & 0.950 & 0.940 \\
3-state with covariates & b23\_0 & -0.200 & -0.243 & -0.204 & 0.037 & 0.035 & 0.870 & 0.950 \\
3-state with covariates & b23\_1 & -0.500 & -0.514 & -0.504 & 0.058 & 0.041 & 0.930 & 0.920 \\
3-state with covariates & b23\_2 & 0.600 & 0.584 & 0.602 & 0.049 & 0.037 & 0.950 & 0.930 \\
\bottomrule
\end{tabular}%
}
\end{table}

\begin{table}[htbp]
\centering
\caption{Simulation results for the four-state competing pathway model with covariates.}
\label{tab:simulation_4state}
\resizebox{\textwidth}{!}{%
\begin{tabular}{llrrrrrrr}
\toprule
Model & Parameter & True & AIM Mean & msm Mean & AIM SD & msm SD & AIM Cov. & msm Cov. \\
\midrule
4-state with covariates & b12\_0 & -0.800 & -0.830 & -0.804 & 0.027 & 0.023 & 0.870 & 0.950 \\
4-state with covariates & b12\_1 & 0.500 & 0.494 & 0.503 & 0.032 & 0.024 & 0.930 & 0.940 \\
4-state with covariates & b12\_2 & -0.300 & -0.334 & -0.299 & 0.028 & 0.022 & 0.870 & 0.970 \\
4-state with covariates & b13\_0 & -1.000 & -1.019 & -0.999 & 0.027 & 0.024 & 0.930 & 0.980 \\
4-state with covariates & b13\_1 & 0.400 & 0.394 & 0.400 & 0.037 & 0.029 & 0.910 & 0.920 \\
4-state with covariates & b13\_2 & 0.200 & 0.199 & 0.200 & 0.034 & 0.026 & 0.960 & 0.920 \\
4-state with covariates & b24\_0 & -0.400 & -0.414 & -0.402 & 0.041 & 0.036 & 0.980 & 0.920 \\
4-state with covariates & b24\_1 & -0.500 & -0.507 & -0.500 & 0.048 & 0.040 & 0.970 & 0.950 \\
4-state with covariates & b24\_2 & 0.600 & 0.603 & 0.602 & 0.050 & 0.041 & 0.900 & 0.950 \\
4-state with covariates & b34\_0 & -0.300 & -0.323 & -0.304 & 0.049 & 0.040 & 0.900 & 0.940 \\
4-state with covariates & b34\_1 & 0.200 & 0.194 & 0.206 & 0.055 & 0.038 & 0.940 & 0.930 \\
4-state with covariates & b34\_2 & 0.500 & 0.496 & 0.504 & 0.059 & 0.038 & 0.910 & 0.910 \\
\bottomrule
\end{tabular}%
}
\end{table}

\subsection{Sensitivity analysis}

To investigate the robustness of the proposed procedure, we examined the effects of both the observed sample size and the number of simulated training datasets.

\paragraph{Sensitivity to sample size.}

We first fixed the number of training simulations at

\[
M=50{,}000,
\]

and considered sample sizes

\[
N\in\{500,\;1250,\;2500,\;5000\}.
\]

For each value of \(N\), one hundred independent datasets were generated under the three simulation scenarios, and posterior summaries were obtained using the trained network. Figure~\ref{fig:sensitivity} (top row) reports the average posterior standard deviations and empirical coverage probabilities of the nominal \(95\%\) credible intervals.

\begin{figure}
\centering
\includegraphics[width=\textwidth]{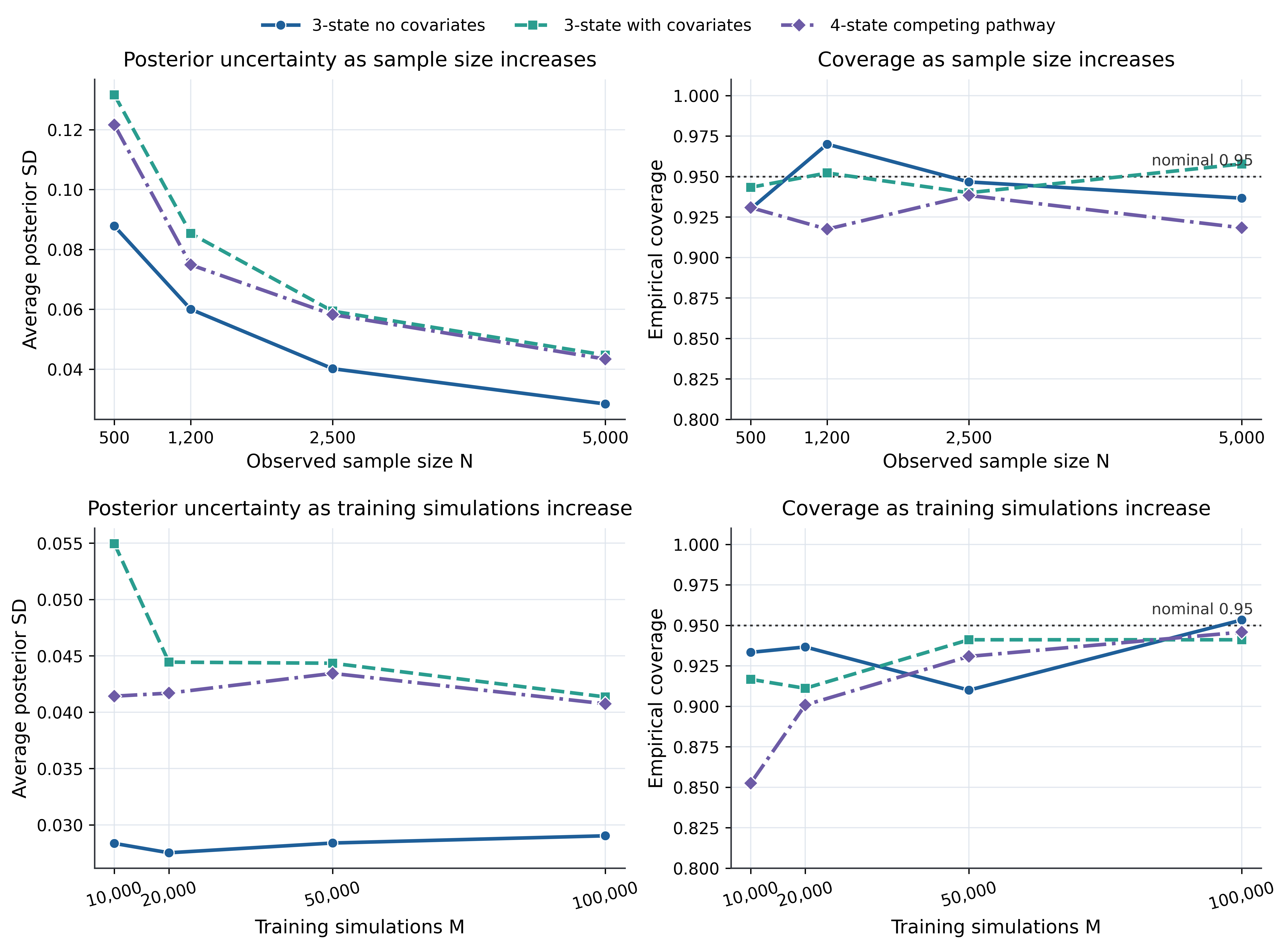}
\caption{
Sensitivity analysis with respect to the observed sample size and the number of simulated training datasets. The top row fixes the number of training simulations at \(M=50{,}000\) and shows the effect of increasing the sample size \(N\). The bottom row fixes \(N=5000\) and investigates the effect of varying the number of training simulations \(M\). Left panels show average posterior standard deviations, while right panels show empirical coverage probabilities for nominal \(95\%\) credible intervals. The horizontal dotted line indicates the nominal coverage level.
}
\label{fig:sensitivity}
\end{figure}

Across all three scenarios, posterior uncertainty decreased substantially as the sample size increased. In particular, the average posterior standard deviation exhibited approximately monotone decay, reflecting the increasing information available in larger datasets. Despite the reduction in posterior variability, empirical coverage probabilities remained stable and stayed close to the nominal level of \(95\%\), indicating that uncertainty quantification was well calibrated over a wide range of sample sizes.

\paragraph{Sensitivity to the number of training simulations.}

Next, we fixed the observed sample size at

\[
N=5000
\]

and varied the number of simulated training datasets,

\[
M\in
\{10{,}000,\;20{,}000,\;50{,}000,\;100{,}000\}.
\]

Figure~\ref{fig:sensitivity} (bottom row) summarizes the corresponding posterior standard deviations and empirical coverage probabilities.

The posterior standard deviations were relatively insensitive to the number of training simulations once

\[
M\ge 20{,}000,
\]

suggesting that the learned posterior approximation had largely stabilized. Similarly, empirical coverage probabilities gradually approached the nominal level as the training size increased, with only minor changes beyond

\[
M=50{,}000.
\]

These results indicate that the proposed amortized inference procedure is robust to the choice of training size and that a moderate number of simulated datasets is sufficient to achieve reliable posterior inference.

Overall, the sensitivity analysis demonstrates two complementary properties of the proposed framework. Increasing the observed sample size primarily improves statistical efficiency, whereas increasing the number of training simulations mainly reduces approximation error in the learned posterior distribution. Together, these findings are consistent with the asymptotic results established in Section~\ref{sec:theory}. Detailed numerical summaries corresponding to the sensitivity analyses are provided in Tables S1--S2 of the Supplementary Material.

\section{Application}
\label{sec:application}

We applied AIM to the CAV panel data \citep{sharples2003diagnostic} available in the
\texttt{msm} package \citep{jackson2011multi}, comprising 622 heart-transplant
recipients observed on a yearly grid over a ten-year window across four states
(CAV-free, mild CAV, moderate/severe CAV, and death). Reverse transitions, which
reflect misclassification of an irreversible process, were recoded upward, and
observations were aligned to a common fixed schedule so that the interval-count
representation of Section~\ref{sec:method} applies. We fitted the Markov model
with allowed transitions $1\!\to\!2$, $1\!\to\!4$, $2\!\to\!3$, $2\!\to\!4$, and
$3\!\to\!4$, adjusting for a binary donor-age covariate (\texttt{dage\_hi}). The
same rectangular panel was used to obtain maximum-likelihood estimates from
\texttt{msm}, giving a like-for-like comparison.

Table~\ref{tab:cav_aim_vs_msm} shows that AIM produced estimates that were highly concordant with the likelihood-based \texttt{msm} analysis. The baseline log-intensities were close across the two methods for all five transitions. In particular, the AIM estimates for $b_{12,0}$, $b_{14,0}$, $b_{23,0}$, and $b_{34,0}$ differed from the corresponding \texttt{msm} estimates by less than 0.10 on the log-intensity scale. The largest baseline difference occurred for the sparsely observed $2\to4$ transition ($b_{24,0}$), for which AIM estimated $-3.295$ compared with the \texttt{msm} estimate of $-2.920$.

The donor-age effects also showed strong agreement in direction. Both methods estimated a positive donor-age effect for the $1\to2$ transition and negative effects for the remaining four transitions. The AIM estimates for $b_{12,1}$, $b_{14,1}$, and $b_{23,1}$ were particularly close to the \texttt{msm} estimates, while larger differences appeared for the less frequent downstream transitions $2\to4$ and $3\to4$. Overall, every \texttt{msm} point estimate fell within the corresponding AIM 95\% credible interval, supporting the calibration of the amortized posterior approximation in this real-data application.

Crucially, once trained, AIM returns the full posterior for a new panel in
milliseconds, without any per-dataset likelihood optimisation or matrix
exponentiation, whereas each \texttt{msm} fit re-optimises the panel likelihood
from scratch. On this dataset the two therefore yield concordant inference, but
only AIM amortises its cost across repeated analyses.

\begin{table}[htbp]
\centering
\caption{Real-data CAV analysis. AIM posterior summaries are compared with
likelihood-based estimates from the \texttt{msm} package under the same
four-state transition structure and identical fixed-schedule panel. Parameters
with subscript 0 are transition intercepts (log-intensities at the covariate
baseline); parameters with subscript 1 are log-hazard-ratio coefficients for
\texttt{dage\_hi} (donor age above the median). $n$ is the observed number of
transitions of each type; ``At risk'' is the number of subjects who ever occupy
the origin state.}
\label{tab:cav_aim_vs_msm}
\begin{tabular}{lrrrrrr}
\toprule
Parameter & $n$ & At risk & AIM estimate & AIM 95\% CI & MSM estimate & MSM 95\% CI \\
\midrule
$b_{12,0}$ & 178 & 622 & $-3.049$ & $[-3.232,\,-2.857]$  & $-3.121$ & $[-3.330,\,-2.917]$ \\
$b_{12,1}$ &  &  & $0.245$ & $[0.025,\,0.453]$  & $0.223$ & $[-0.048,\,0.494]$ \\
$b_{14,0}$ & 78 & 622 & $-4.090$ & $[-4.438,\,-3.761]$  & $-4.013$ & $[-4.358,\,-3.666]$ \\
$b_{14,1}$ &  &  & $-0.188$ & $[-0.637,\,0.272]$  & $-0.179$ & $[-0.655,\,0.298]$ \\
$b_{23,0}$ & 50 & 178 & $-2.008$ & $[-2.283,\,-1.740]$  & $-1.974$ & $[-2.285,\,-1.655]$ \\
$b_{23,1}$ &  &  & $-0.199$ & $[-0.543,\,0.151]$  & $-0.170$ & $[-0.600,\,0.260]$ \\
$b_{24,0}$ & 29 & 178 & $-3.295$ & $[-3.895,\,-2.767]$  & $-2.920$ & $[-3.557,\,-2.259]$ \\
$b_{24,1}$ &  &  & $-0.148$ & $[-0.774,\,0.476]$  & $-0.466$ & $[-1.298,\,0.366]$ \\
$b_{34,0}$ & 28 & 88 & $-2.158$ & $[-2.533,\,-1.776]$  & $-2.082$ & $[-2.581,\,-1.554]$ \\
$b_{34,1}$ &  &  & $-0.516$ & $[-1.033,\,-0.002]$  & $-0.748$ & $[-1.489,\,-0.008]$ \\
\bottomrule
\end{tabular}
\end{table}

\section{Discussion}

We have proposed AIM, an amortized neural Bayes procedure for
interval-censored continuous time MSTMs. The central idea is simple:
simulate extensively from a specified multistate model class, learn the mapping
from likelihood-informed summaries to parameters, and reuse the learned
posterior approximation for future datasets. This changes the computational
profile of multistate inference. Instead of solving a new likelihood
optimization problem for every panel, AIM performs the expensive work offline
and reduces online inference to summary construction and a neural-network
evaluation.

The methodological contribution is to make this amortization compatible with
classical multistate structure. The summaries are not generic black-box
features; they are built from interval transition counts, state occupancies, and
covariate-stratum proportions, which correspond directly to the observed panel
likelihood. This connection supports the theoretical analysis. Under fixed
visit schedules and discretized covariate strata, the transition count tables
are sufficient for the panel likelihood. With reachability and full-rank design
conditions, the population summaries identify the transition-intensity
parameters. These ingredients yield consistency of the summary posterior, and a
sufficiently accurate mixture-density approximation inherits that consistency.

The CAV analysis illustrates both the promise and the current scope of the
proposed framework. Under a fixed observation schedule and a discretised
covariate, AIM recovers essentially the same transition-intensity estimates as
established likelihood-based software, providing external validation that the
amortised posterior is well calibrated on real data rather than only in
simulation. The one point of disagreement, the marginal donor-age effect on the
sparsely observed $3\!\to\!4$ transition, does not reflect a systematic bias but
rather the intrinsic difficulty of estimating covariate effects from very few
events, a difficulty shared by any method. Notably, AIM additionally supplies
covariate-adjusted inference that the sequential ABC treatment of the same data
\citep{tancredi2019approximate} did not attempt, while retaining full agreement with
\texttt{msm} on the shared quantities.

Two limitations specific to this application deserve emphasis. First, aligning
the irregularly observed transplant follow-up to a common yearly grid discards
some timing information; although this mirrors standard practice for discretely
observed multistate data, the sensitivity of conclusions to the discretisation
window warrants further study. Second, the CAV data are modest in size and their
covariates are only weakly informative, so this application is best read as a
concordance and calibration check against a gold-standard implementation rather
than as a source of new clinical findings. The principal advantage of amortisation is the ability to reuse a single trained network across many datasets, subgroups, or resampling-based analyses without repeatedly evaluating the likelihood. This advantage is most consequential 
when inference must be performed many times, a setting that is not fully represented 
by the present single-cohort analysis.
The simulation results clarify the trade-off. Direct likelihood fitting through
\texttt{msm} remains a strong benchmark under correct specification and, in our
experiments, produced slightly better interval calibration. AIM, however,
delivered similar point estimates and broadly reasonable uncertainty
quantification while reducing online inference time by one to two orders of
magnitude or more. This is the relevant operating regime for amortized
inference: not replacing a single carefully tuned likelihood fit, but enabling
large numbers of repeated analyses after a model class has been fixed.

Several limitations remain. First, the present theory and implementation assume
a common observation schedule and discretized covariate strata. These
assumptions make the summary map fixed-dimensional and interpretable, but they
are restrictive for studies with irregular follow-up or continuous covariate
effects that one does not want to discretize. Second, AIM is currently trained
for a specified state space and transition graph. A different graph generally
requires a new simulator, summary map, and trained network. Third, the summary
statistics are hand designed. Their connection to the likelihood is a strength,
but learned encoders based on sequences, graphs, or uniformization
representations may recover additional information in settings where the present
summaries are no longer sufficient. Finally, the mild undercoverage observed in
some simulations indicates that calibration diagnostics should accompany any
new deployment, especially when training budgets are limited or the observed
dataset lies near the edge of the prior predictive distribution.

Future work should therefore focus on irregular observation processes,
continuous covariate representations, calibration adjustments for neural
posteriors, and architectures that can share information across related
transition graphs. The broader message is that amortized inference can be
integrated with the structure of classical multistate models rather than used as
a purely black-box substitute. That integration is what makes reusable Bayesian
inference for multistate transition analysis a realistic target.

\section*{Data Availability Statement}

The cardiac allograft vasculopathy data analyzed in this study are publicly available as the \texttt{cav} dataset in the R package \texttt{msm}. The dataset can be loaded in R using \texttt{data("cav", package = "msm")}.

\appendix
\renewcommand{\thetable}{S\arabic{table}}
\renewcommand{\theHtable}{S\arabic{table}}
\setcounter{table}{0}
\section{Proof of Lemma~\ref{lem:sufficiency_counts}}
\label{app:sufficiency_proof}

Fix a stratum \(g\). For subject \(i\in g\), write the observed panel trajectory as
\[
\mathbf x_i=(x_i^{(0)},x_i^{(1)},\ldots,x_i^{(T)}),
\]
where \(x_i^{(k)}=X_i(t_k)\). By the Markov property of the continuous-time process observed at the fixed visit times,
\[
\Pr_\theta(\mathbf X_i=\mathbf x_i\mid i\in g)
=
\Pr\{X_i(t_0)=x_i^{(0)}\}
\prod_{k=1}^T
p_{x_i^{(k-1)}x_i^{(k)}}^{(k,g)}(\theta).
\]
The first factor does not depend on \(\theta\). Conditional independence across subjects gives
\[
L_g(\theta;X)
=
\prod_{i\in g}\Pr\{X_i(t_0)=x_i^{(0)}\}
\prod_{i\in g}\prod_{k=1}^T
p_{x_i^{(k-1)}x_i^{(k)}}^{(k,g)}(\theta).
\]
Grouping identical interval transitions yields
\[
\prod_{i\in g}\prod_{k=1}^T
p_{x_i^{(k-1)}x_i^{(k)}}^{(k,g)}(\theta)
=
\prod_{k=1}^T\prod_{r=1}^S\prod_{s=1}^S
\{p_{rs}^{(k,g)}(\theta)\}^{n_{rs}^{(k,g)}}.
\]
Therefore,
\[
L_g(\theta;X)
=
h_g(X)g_g(T_g(X),\theta),
\]
where
\[
h_g(X)=\prod_{i\in g}\Pr\{X_i(t_0)=x_i^{(0)}\}
\]
does not depend on \(\theta\), and
\[
g_g(T_g(X),\theta)
=
\prod_{k,r,s}
\{p_{rs}^{(k,g)}(\theta)\}^{n_{rs}^{(k,g)}}.
\]
Across strata,
\[
L(\theta;X)=\prod_{g=1}^G L_g(\theta;X)
=
\left\{\prod_g h_g(X)\right\}
\left\{\prod_g g_g(T_g(X),\theta)\right\}.
\]
The first factor is free of \(\theta\), and the second depends on the data only through the pooled count tables \(T(X)\). By the Fisher--Neyman factorization theorem, \(T(X)\) is sufficient for \(\theta\) under the discretized-covariate observed panel likelihood.

\section{Additional results for sensitivity analyses} Detailed numerical summaries corresponding to the sensitivity analyses reported in Figure~\ref{fig:sensitivity} are presented below. 

\begin{table}[!htbp]
\caption{Sensitivity to sample size. Number of training simulations fixed at $M=50000$; performance averaged over model parameters and 100 test datasets.}
\label{tab:sens_sampleN}
\begin{tabular}{lrrrrr}
\toprule
Scenario & N & Post. SD & RMSE & Bias & Coverage \\
\midrule
3-state no covariates & 500 & 0.088 & 0.092 & 0.014 & 0.930 \\
3-state no covariates & 1200 & 0.060 & 0.057 & 0.006 & 0.970 \\
3-state no covariates & 2500 & 0.040 & 0.041 & 0.005 & 0.947 \\
3-state no covariates & 5000 & 0.028 & 0.028 & 0.003 & 0.937 \\
3-state with covariates & 500 & 0.132 & 0.132 & 0.035 & 0.943 \\
3-state with covariates & 1200 & 0.085 & 0.083 & 0.013 & 0.952 \\
3-state with covariates & 2500 & 0.059 & 0.060 & 0.018 & 0.940 \\
3-state with covariates & 5000 & 0.045 & 0.043 & 0.010 & 0.958 \\
4-state with covariates & 500 & 0.122 & 0.128 & 0.030 & 0.931 \\
4-state with covariates & 1200 & 0.075 & 0.085 & 0.016 & 0.918 \\
4-state with covariates & 2500 & 0.058 & 0.060 & 0.013 & 0.938 \\
4-state with covariates & 5000 & 0.043 & 0.048 & 0.021 & 0.918 \\
\bottomrule
\end{tabular}
\end{table}

\begin{table}[!htbp]
\caption{Sensitivity to the number of training simulations. Sample size fixed at $N=5000$; performance averaged over model parameters and 100 test datasets.}
\label{tab:sens_trainM}
\begin{tabular}{lrrrrr}
\toprule
Scenario & M & Post. SD & RMSE & Bias & Coverage \\
\midrule
3-state no covariates & 10000 & 0.028 & 0.029 & 0.005 & 0.933 \\
3-state no covariates & 20000 & 0.028 & 0.030 & 0.008 & 0.937 \\
3-state no covariates & 50000 & 0.028 & 0.030 & 0.007 & 0.910 \\
3-state no covariates & 100000 & 0.029 & 0.028 & 0.003 & 0.953 \\
3-state with covariates & 10000 & 0.055 & 0.054 & 0.022 & 0.917 \\
3-state with covariates & 20000 & 0.044 & 0.049 & 0.018 & 0.911 \\
3-state with covariates & 50000 & 0.044 & 0.046 & 0.011 & 0.941 \\
3-state with covariates & 100000 & 0.041 & 0.043 & 0.013 & 0.941 \\
4-state with covariates & 10000 & 0.041 & 0.052 & 0.026 & 0.853 \\
4-state with covariates & 20000 & 0.042 & 0.047 & 0.015 & 0.901 \\
4-state with covariates & 50000 & 0.043 & 0.047 & 0.018 & 0.931 \\
4-state with covariates & 100000 & 0.041 & 0.042 & 0.010 & 0.946 \\
\bottomrule
\end{tabular}
\end{table}

\clearpage
\bibliography{references}
\end{document}